\newcommand{\mc}[1]{\mathcal{#1}}
\newcommand{\graph}{\mc{G}}
\newtheorem{lemma}{Lemma}
\newtheorem{theorem}{Theorem}
\newtheorem{corollary}{Corollary}
\newtheorem{remark}{Remark}
\newtheorem{assumption}{Assumption}
\def\graph{\mc{G}}
\def\reals{\mathbb{R}}
\def\dstate{x}
\def\astate{y}
\def\visregion{\mathcal{U}_k}
\def\path{\mathcal{P}}
\def\safeset{\mathcal{C}}
\def\frontier{\mathcal{F}}
\DeclareMathOperator{\sign}{sign}
\newif\iflong
\newcommand{\lemmaproofonline}{The proof can be found in the full-length version online at \url{https://arxiv.org/abs/2204.04176}.\gdef\lemmaproofonline{See full-length version online.}}
\newcommand{\removelatexerror}{\let\@latex@error\@gobble}
\title{Path Defense in Dynamic Defender-Attacker Blotto Games (dDAB)\\ with Limited Information}
\author{Austin K. Chen$^{1*}$, Bryce L. Ferguson$^{2*}$, Daigo Shishika$^{3}$, Michael Dorothy$^{4}$,\\ Jason R. Marden$^{2}$, George J. Pappas$^{1}$, Vijay Kumar$^{1}$
\thanks{$^*$The first two authors contributed equally as co-first authors.}
\thanks{$^{1}$Austin, George, and Vijay are with the University of Pennsylvania
        {\tt\small \{akchen, pappasg, kumar\}@seas.upenn.edu }}%
\thanks{$^{2}$Bryce and Jason are with the University of California, Santa Barbara
        {\tt\small \{blferguson,  jrmarden\}@ece.ucsb.edu }}
\thanks{$^{3}$ Daigo is with George Mason University {\tt\small dshishik@gmu.edu }}
\thanks{\raggedright $^{4}$ Michael is with DEVCOM Army Research Laboratory 
{\tt\small michael.r.dorothy.civ@army.mil}}
\thanks{The views expressed in this paper are those of the authors and do not reflect the official policy or position of the United States Government, Department of Defense, or its components. We gratefully acknowledge the support from 
ARL Grant DCIST CRA W911NF-17-2-0181, 
NSF Grant CCR-2112665, ONR grants N00014-20-1-2822 
and N00014-20-S-B001, 
and Lockheed Martin. 
}
}
\begin{document}

\maketitle
\thispagestyle{empty}
\pagestyle{empty}

\begin{abstract}

We consider a path guarding problem in dynamic Defender-Attacker Blotto games (dDAB), where a team of robots must defend a path in a graph against adversarial agents.
Multi-robot systems are particularly well suited to this application, as recent work has shown the effectiveness of these systems in related areas such as perimeter defense and surveillance.
When designing a defender policy that guarantees the defense of a path, information about the adversary and the environment can be helpful and may reduce the number of resources required by the defender to achieve a sufficient level of security.
In this work, we characterize the necessary and sufficient number of assets needed to guarantee the defense of a shortest path between two nodes in dDAB games when the defender can only detect assets within $k$-hops of a shortest path.
By characterizing the relationship between sensing horizon and required resources, we show that increasing the sensing capability of the defender greatly reduces the number of defender assets needed to defend the path.
\end{abstract}






\section{Introduction}
The emergence of new technologies in multi-robot systems and their applications in surveillance \cite{renzaglia2012multi}, pick-and-place~\cite{bozma2012multirobot}, delivery \cite{mathew2015planning}, etc., have motivated a large area of research on determining how to delegate tasks to each robot \cite{korsah2013comprehensive, khamis2015multi} and how to allocate robotic resources to different regions~\cite{Ferguson2021robust, Marden2017}.
With an increased understanding of how to perform this task assignment comes improvements in the overall operation of the system and the potential to complete jobs with less physical resources.


One particular area where multi-robot systems can offer new opportunities is in environments with adversarial operators~\cite{agmon2008multi}.
The use of security or defense systems has been studied in many different contexts including cross-fire attacks in network-routing~\cite{Kang2013,grimsman2020stackelberg}, security against malicious groups~\cite{brown2006defending}, defending networks of sub-systems from multiple attackers~\cite{Abdallah2021,kovenock2018optimal}, and many more.
Within each of these settings, the defender's ability to offer security guarantees in the face of unknown adversarial actions is hard~\cite{Duvallet2010} and depends on their knowledge of the system environment and the adversary's capabilities.
In this work, we seek to understand how increasing information about the adversary and the environment can improve a defender's ability to provide security guarantees with limited resources.

\begin{figure}
    \centering
    \includegraphics[width=0.95\linewidth]{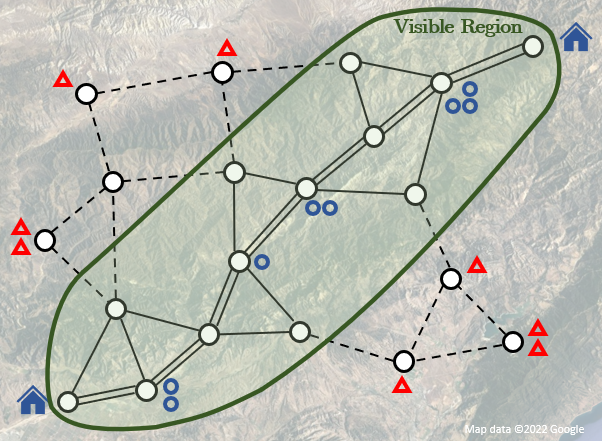}
    \caption{Illustration of the limited-visibility path defense problem. At each node, the defender (blue) and the attacker (red) each posses a number of assets that they sequentially maneuver through the network. The objective of the defender is to guarantee each node on the double-line path has more defender assets than attacker assets at every time step. The defender can only detect attacker assets within its sensing horizon, represented as the shaded green region.}
    \label{fig:game_illustration}
    \vspace{-5mm}
\end{figure}



The interactions between a defender and an attacker (or adversary) can be described by a two-player zero-sum game, where each of the decision-makers' objectives is inversely aligned.
One model that captures the key principles of these interactions are Colonel Blotto games~\cite{roberson2006colonel}, where two players each posses a finite reserve of troops that they allocate to various battlefields; at each battlefield, whichever player has allotted more troops wins the battlefield.
The Colonel Blotto game has been used to develop algorithms and deploy security strategies in many domains including airport security, border control, and wildlife protection~\cite{tambe2011security,pita2008deployed,xu2020stay}.
Additionally, researchers have used these games to study the interactions of defenders and attackers when there is incomplete information about the value of battlefields~\cite{adamo2009blotto,kovenock2011blotto} or the budget of the opposing colonel~\cite{paarporn2021general,Paarporn2019b}.
Though these results provide a first glimpse at how information affects defender decision making, the results focus on each colonel's ability to win battlefields in a one-shot setting. 
Instead, we wish to investigate the conditions under which a defender may guarantee a defense objective against a dynamic adversary.


In this work, we study the dynamic Defender-Attacker Blotto (dDAB) game where a defender and attacker sequentially maneuver assets in a network; on each node, whichever decision maker possesses more assets takes control of it.
Originally introduced in \cite{shishika2021dynamic}, the dDAB game was used to address what strategies a team of robots can use to defend every node in a network from adversarial intervention.
Here, we focus on the path guarding problem, where the defender must maintain control of each node in a shortest path between an origin and destination (\cref{fig:game_illustration}).
We then address how increased information about the adversary and the surrounding environment (in terms of visibility from the shortest path) affects the amount of defender resources required to obtain security guarantees.
This work is closely connected to perimeter defense problems~\cite{shishika2020review, chen2014path, garcia2019strategies, ESLee2020} and especially their multi-defender versions \cite{Macharet2020, AKChen2021,Shishika2021p}; however, here we specifically focus on paths in networks and introduce a setting where the defender has limited information about the adversary.

For a defender with a limited sensing horizon (measured by how many hops from the chosen path the defender can detect an adversary, potentially using implemented surveillance equipment), we characterize the necessary and sufficient number of defender assets required to guarantee that the defender can maintain control of a shortest path between a start and a target node\iflong \ (\cref{fig:big_net})\fi.
We also provide an initial deployment and algorithm that realizes this guarantee.
Our result shows that as the sensing horizon (and defender information) increases, the required number of assets to guarantee defense decreases.

\section{Problem Formulation}


In a finite graph $\graph= (\mathcal{V}, \mathcal{E})$, we consider the problem of path defense, where a defender seeks to defend a predetermined path $\path$ from an adversary.
We define the path $\path$ with cardinality $|\path |$ as a path graph with vertices $p_1, \dots, p_{|\path|}$ and $|\path|-1$ edges denoted by $(p_1, p_2), (p_2, p_3) \dots, (p_{|\path|-1}, p_{|\path|})$. We assume $|\path|\geq 3$ to avoid trivially short paths. Denote the start node $S$ and the target $T$ as the first and last nodes in the path respectively, so that $p_1 \equiv S$ and $p_{|\path|} \equiv T$. We will now make an assumption on the structure of $\path$, using the distance $d(v_1, v_2)$ to denote the minimum length of any path (as measured by the number of edges) between $v_1$ and $v_2$:

\begin{assumption} \label{as:shortest_path}
$\path$ is a shortest $S-T$ path in $\graph$, i.e.
\begin{equation}\label{eq:shortest_path}
     d(S,T) = |\path| - 1.
\end{equation}
\end{assumption}
%
Together, $\path$ and $\graph$ define the environment in which a specific instance of the game is played. Since most of the operations and functions that follow depend on these parameters, we will omit the dependence on $\path$ and $\graph$ when the relationship is clear.

The path defense game is played by a defender and an adversary. Both players have a finite amount of resources (or assets) at their disposal, denoted by $X \in \reals_{> 0}$ for the defender and $Y \in \reals_{> 0}$ for the adversary. Each player distributes their assets over the vertices of $\graph$ in a deterministic and centralized fashion. The resulting asset distribution for each player is a point within the standard simplex of dimension $|\mathcal{V}|$, scaled so that the sum over elements is equal to the total resources of the player. For the defender, the asset distribution $x$ is a vector defined as
\begin{equation}
    \dstate \in \Big\{ z \in \mathbb{R}^{|\mathcal{V}|}_{\geq 0}\ \Big| \ \sum_{i=1}^{|\mathcal{V}|} z_i = X, z_i \geq 0\Big\}.
\end{equation}
The adversary asset distribution $y$ is defined as above with $X$ replaced by $Y$. The scalar amounts of resources for the defender and adversary at vertex $v$ are defined as $x_v$ and $y_v$ respectively.

The game terminates with an adversary win if the adversary has more assets than the defender on any of the path nodes, i.e. if $\exists v \in \path$ such that $y_v > x_v$. Otherwise, the path is defended. 
We define the \emph{safe set}, $\safeset$, as follows:
\begin{equation}
    \safeset = \left\{[x,y]\;|\;x_v \geq y_v,\forall v\in\path \right\}.
\end{equation}

\iflong
\begin{figure}
    \centering
    \includegraphics[width=0.48\textwidth]{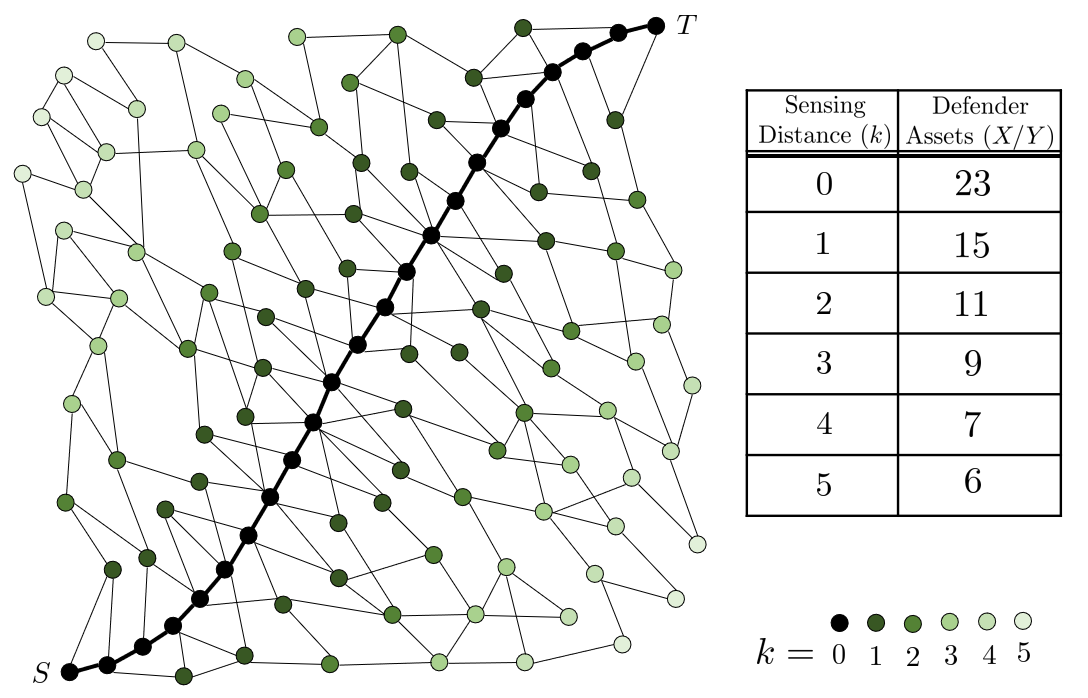}
    \caption{Visibility in a larger network, where a node's distance from the defended, shortest path is denoted by its color. As the sensing distance $k$ increases, the defender requires fewer assets to defend the path. The necessary and sufficient number of defender assets ($X$), relative to the number of attacker assets ($Y$) are given in the table above and are precisely those defined in \eqref{eq:assets_bound} where here $|\mathcal{P}|=23$.}
    \label{fig:big_net}
    \vspace{-5mm}
\end{figure}
\fi

In this paper, we consider the dynamic version of the path defense game, where the game is played out over a series of timesteps $t$. Accordingly, the defender and adversary allocations become time-varying vectors $x(t)$ and $y(t)$ respectively. When the game begins, the defender and adversary select some initial states $x(0)$ and $y(0)$. We wish to consider only non-trivial starting conditions, i.e. we assume that $[x(0), y(0)] \in \safeset$. Because the adversary can choose any arbitrary $y(0)$, we will allow the defender to observe $y(0)$ before deciding its own initial state $x(0)$.

Given states $x(t)$ and $y(t)$, the game timestep $t$ consists of the defender first transitioning its assets according to the function
\begin{equation}
    \dstate(t+1) = K^D(t) \cdot \dstate(t),
\end{equation}
where $K^D \in \reals^{|\mathcal{V}| \times |\mathcal{V}|}$ is a column stochastic state transition matrix (this enforces that the total number of defender assets remains unchanged over time). All elements of $K^D$ must be nonnegative and entry $K^D_{i,j} > 0$ only if $(v_i, v_j) \in \mathcal{E}$.
These constraints capture the notion that at every time step the defender can only move its assets up to one hop away from their current positions, and that assets may only be transferred along edges in $\graph$.

\begin{figure}[t]
    \vspace{1.5mm}
    \centering
    \includegraphics[width=.8\linewidth]{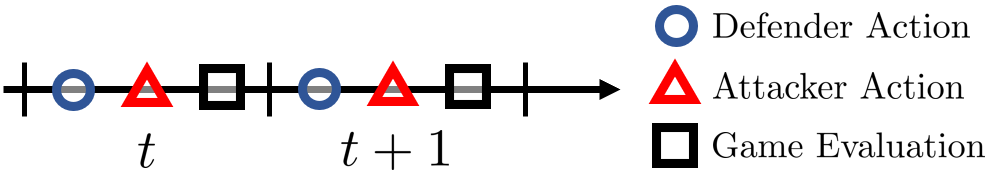}
    \caption{Sequence of events at every step of the game. The defender first moves its assets based on the current adversary state, after which the adversary observes before making its own move. Finally, the game outcome is evaluated.}
    \label{fig:game_sequence}
    \vspace{-5mm}
\end{figure}

In the same timestep $t$, after the defender transitions its assets, the adversary transitions its own assets according to 
\begin{equation}
    \astate(t+1) = K^A(t) \cdot \astate(t)
\end{equation}
and the result of the game is evaluated. If $[x(t+1), y(t+1)] \notin \safeset$, then the game is terminated and declared an adversary win. Otherwise, the game continues to the next timestep $t+1$. This order of play is shown in Figure~\ref{fig:game_sequence}.

We make no assumptions on how $K^A(t)$ is generated, but it must obey the graph and resource preservation constraints. Note that in this formulation, the problem takes the form of a Stackelberg game at every timestep since the adversary may observe the defender's action before taking its own. In contrast, the defender must generate its strategy without knowing where the attacker will move.

In this work, we are particularly interested in understanding the effect that information has on the defender's capability to maintain defense of the path.
While both players know the state of their own assets and the adversary can observe the state of the defender's assets, the defender may not be able to fully observe adversarial assets.
Let $k$ be the defender's sensing distance and define the visible region $\visregion \subseteq \mathcal{V}$ as the set of vertices where the defender is able to observe the adversary's assets. If we define the minimum path distance function $d^*(v)$ for a vertex $v\in\mathcal{V}$ as
\begin{equation}\label{eq:pathDistance}
    d^*(v) = \min_{p_i \in \path} d(v, p_i)
\end{equation}
then the visible region $\visregion$ for a specific $k$ is given as
\begin{equation}\label{eq:visibilityRegion}
    \visregion = \{v \in \mathcal{V} \ | \ d^*(v) \leq k\}.
\end{equation}
Since the defender can only observe adversary assets when they are in $\visregion$, the defender only observes a sub-vector of the adversary state $y(t)$. We call this sub-vector the observable adversary state $\widehat{y}(t)$ and define it as
\begin{equation}
    \widehat{y}(t) = \{ y_v(t) \ | \ v \in \visregion\}
\end{equation}
where $y_v(t)$ is the adversary assets at vertex $v$ during time $t$. We can now specify the form of $K^D(t)$ as
\begin{equation}
    K^D(t) = \pi^D(x(t), \widehat{y}(t))
\end{equation}
where $\pi^D \colon \reals^{2|\mathcal{V}|} \to \reals^{|\mathcal{V}| \times |\mathcal{V}| }$ is the control policy of the defender.

In this paper, we are interested in investigating how to determine $\pi^D$ and the minimum number of defender assets $X$ required to ensure that the path is always defended, i.e. that $[x(t), y(t)] \in \safeset \ \forall \ t$. We will also investigate how the required amount of defender assets changes as a function of the sensing distance $k$.

\subsection{Properties of Path Guarding Games}
We make a few observations about the dDAB problem as specified by the preceding problem formulation. 
First, no vertex $v\in\mathcal{V}\setminus \path$ can be connected to a pair of path nodes $p_i,p_j\in\path$ such that $d(p_i,p_j)>2$.  
This follows because of the assumption that $\path$ is a shortest path between $S$ and $T$ in $\graph$, as if it were not true then a shorter path would exist through $v$. Additionally, from this property, we may also see that each node may be connected to at most $3$ distinct path vertices, because otherwise the distance between the connected path vertices would exceed $2$. In this case, the $3$ path vertices must also be consecutive.

Another observation we make is that each vertex $v \in \mathcal{V}$ has its own static path distance $d^*(v)$. We may then organize the vertices in terms of their path distances. This also means that every visible region with a certain sensing distance is always a subset of a visible region with a greater sensing distance, i.e.
$
    k \leq m\implies \visregion \subseteq \mathcal{U}_m
$.
Intuitively, this means that as the sensing distance increases, the corresponding visible region grows to include nodes that are farther from the path, e.g. for $k=0$ the visible region is just $\path$, for $k=1$ the visible region is $\path$ and all 1-hop neighbors of $\path$, etc.

\begin{figure}
    \vspace{1.5mm}
    \centering
    \includegraphics[width=0.48\textwidth]{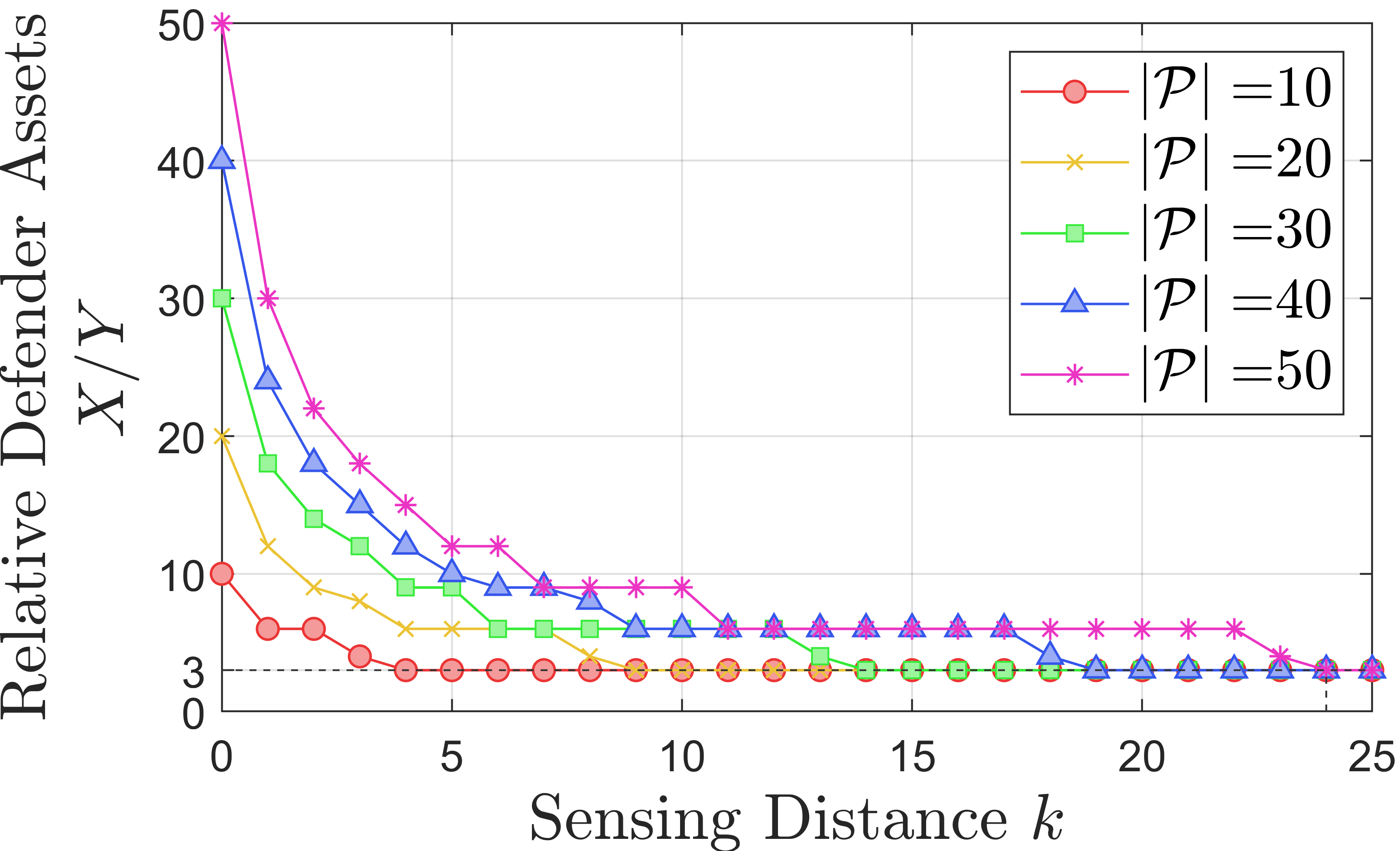}
    \caption{Number of defending assets needed over various sensing distances. The number of defender assets needed to guarantee defense on a shortest path $\path$ decreases as the sensing horizon increases. When the sensing horizon is $k>|\mathcal{P}|/2$ only $3Y$ defender assets are needed.}
    \label{fig:assets_needed}
    \vspace{-5mm}
\end{figure}

\section{Main Result}
We seek to understand how additional information can aid the defender when operating in an uncertain environment against an adversary.
In such settings, the environment (i.e. network structure) may influence how difficult it is for the defender to maintain their objective; however, it may be difficult or impossible for the defender to know this a priori.
We therefore derive a scheme for defense policies that will succeed in any, considered network structure.


To understand how information can help a defender maintain their objective of securing $\path$, \cref{thm:path_guard_vis} characterizes the necessary and sufficient amount of defender assets needed to guarantee that $\path$ is defended against any adversarial strategies.


\begin{theorem}\label{thm:path_guard_vis}
Let $\path$ be a path to be defended which satisfies the shortest path condition \eqref{eq:shortest_path} in $\mathcal{G}$.
For a given defender sensing horizon $k$, the condition
\begin{equation}\label{eq:assets_bound}
\resizebox{.87\hsize}{!}{$
    X \geq \left( 3 \left\lfloor {\frac{|\mathcal{P}|}{2k+3}} \right\rfloor + \min \Big\{ \hspace{-1mm}\bmod \hspace{-1mm} (|\mathcal{P}|, 2k+3), 3 \Big\}\right) Y
    $}
\end{equation}
is sufficient for $\path$ to be defended in any graph $\mathcal{G}$ and necessary for $\path$ to be defended across all graph structures $\mathcal{G}$.
\end{theorem}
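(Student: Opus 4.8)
The plan is to prove the two directions separately: for sufficiency I would exhibit an explicit defender policy built on a fixed partition of $\path$, and for necessity I would construct, for each sensing horizon $k$, a single graph on which every defender with strictly fewer assets is beaten.

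\emph{Sufficiency.} Split $\path$ into $\lfloor |\path|/(2k+3)\rfloor$ consecutive blocks of $2k+3$ nodes followed by a leftover block of $r := |\path| \bmod (2k+3)$ nodes, and assign $3Y$ assets to each full block and $\min\{r,3\}\,Y$ to the leftover block, so the total exactly equals the right-hand side of \eqref{eq:assets_bound}. Each block is then defended independently. The key enabling fact is a warning-time lemma: since the nodes of $\path$ are themselves the targets, any attacker mass occupying some $v\in\path$ at time $\tau$ must lie within graph distance $k$ of $v$ — hence inside $\visregion$, hence observed — at time $\tau-k$; equivalently, currently hidden mass needs at least $k$ further steps before it can touch $\path$. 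Thus the defender always has (at least) $k$ steps of warning, and it also sees the initial state $y(0)$ and can initialize accordingly. Within a block the defender keeps a reservoir of $3Y$ assets spread over three consecutive nodes near the block's center and slides/splits it toward incoming threats. Because a centered reservoir is within $k$ hops of every node of its block and the total threat that can ever be mounted against one block is at most the attacker's budget $Y<3Y$, a max-flow/Hall-type routing argument shows the defender can always put $\ge Y$ assets on every node the attacker could strike in time; the width-three reservoir matches the earlier observation that an off-path vertex is adjacent to at most three consecutive path nodes, which also handles attacks straddling two block boundaries. The leftover block is covered by the same argument when $r\ge 3$ and by static coverage when $r\le 2$.

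\emph{Necessity.} For each $k$ I would take $\path$ and decorate it so that it remains a (not necessarily unique) shortest $S$–$T$ path — so \cref{as:shortest_path} still holds — by attaching \emph{gadgets} spaced $2k+3$ apart along the path: at path indices $1,\,2k+4,\,2(2k+3)+1,\dots$ attach a vertex joined to the (up to) three consecutive path nodes there, give that vertex a pendant tail of length $k-1$ (empty when $k\le 1$), and splice the deep ends of all tails onto one ``hidden corridor'' — a path every vertex of which has path-distance greater than $k$ and so is never observed. The spacing $2k+3$ is chosen so that for each gadget the set of vertices within $k$ hops of its three targets is exactly the $2k+3$ path nodes around it together with that gadget's tail, that these ``regions'' are pairwise disjoint, and that no route through the corridor is shorter than $\path$. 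One then argues: (i) an attack sent down a tail becomes visible only on reaching distance $k$ from its targets, giving the defender at most $k$ moves of warning; (ii) in $k$ moves only assets already inside a gadget's region at the moment of reveal can reach its targets, and since the attacker can still choose which of the three targets to hit, surviving a strike on that gadget requires $\ge 3Y$ assets (resp.\ $\ge\min\{r,3\}\,Y$ for the leftover gadget) inside the region at that moment; (iii) because the regions are disjoint, a defender with strictly fewer than the stated total always leaves some region short at every instant; and (iv) the attacker, hiding its whole budget in the corridor, can shift position by one gadget per step — far faster than the defender can relocate a $3Y$ lump, which must traverse $2k+3$ path hops — so it can always chase down a currently-short gadget and break through.

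The step I expect to be most delicate is the timing bookkeeping that makes everything tight at exactly $2k+3$ (rather than $2k+2$ or $2k+4$). On the sufficiency side, the routing argument must be phrased as an invariant robust to feints — the attacker threatening one end of a block, retreating through hidden vertices, then threatening the other end — using that a retreat costs the attacker as many steps as it later saves. On the necessity side, the gadget spacing, tail length, and corridor must be calibrated so that the regions neither overlap nor permit cross-region reinforcement within the $k$-step window, and one must verify that the leftover gadget contributes precisely the $\min\{|\path|\bmod(2k+3),\,3\}$ term and that no further gadgets can be packed in to force more.
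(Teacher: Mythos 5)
Your sufficiency sketch follows essentially the same route as the paper (partition $\path$ into blocks of $2k+3$, keep a width-three ``platoon'' near each block center, slide it toward detected threats), but your necessity construction has a concrete flaw. The ``hidden corridor'' that splices together the deep ends of all gadget tails creates shortcuts that violate \cref{as:shortest_path}. With tails of length $k-1$ attached to gadget vertices one hop off the path, a corridor vertex sits at distance $k+1$ from its gadget's central target; if consecutive corridor vertices are adjacent (which is what you need for the attacker to ``shift position by one gadget per step''), then two gadget centers that are $2(2k+3)=4k+6$ apart along $\path$ are joined by a route of length $(k+1)+2+(k+1)=2k+4$ through the corridor. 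Any such shortcut between two path nodes shortens the $S$--$T$ distance below $|\path|-1$, so the graph you build is outside the class the theorem quantifies over. Lengthening the corridor to kill the shortcuts destroys the one-gadget-per-step mobility your step (iv) relies on. The fix is to notice the corridor is unnecessary: since the attacker starts outside $\visregion$, the defender must choose $x(0)$ without knowing which gadget the attacker sits behind, so it suffices to give each gadget its own disjoint hidden approach tail and let the attacker simply commit, at time $0$, to the gadget whose $(2k+3)$-node region holds fewer than $3Y$ assets. This is how the paper argues: a per-window necessary condition (three assets must be deployable onto any three consecutive path nodes within $k$ defender moves, hence three assets must initially sit in the surrounding $2k+3$-node window) plus a sliding/disjoint-window count, with a separate small gadget graph witnessing each violated window, rather than one mobile attacker chasing shortfalls.

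On the sufficiency side, the load-bearing step --- your ``max-flow/Hall-type routing argument'' robust to feints --- is exactly where the real work lies, and it is not supplied. The paper resolves it with an explicit invariant: defining the advantage $a_i = d^A_i - d^D_i$ at each path node, showing the attacker can win only if some $a_i \le -2$, and proving (using the shortest-path assumption) that negative advantages can never appear on both sides of a platoon simultaneously --- so the platoon always has a well-defined direction to move and can maintain $a_i \ge -1$ forever, including when the attacker exits and re-enters $\visregion$. Your observation that ``a retreat costs the attacker as many steps as it later saves'' is in the right spirit, but without the two-sided-threat lemma you have not ruled out the attacker pinning the platoon between two simultaneous threats inside one block; that is precisely the case the shortest-path hypothesis is needed to exclude.
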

\noindent Here, $\bmod (\diamond,\blacksquare)$ denotes the remainder of the Euclidean division of $\diamond$ by $\blacksquare$. The proof of \cref{thm:path_guard_vis} appears in \cref{sec:proof} with various parts separated into subsections.

In \cref{fig:assets_needed}, we show how the number of assets needed, quantified by \eqref{eq:assets_bound}, decreases as the sensing horizon increases.
Quantifying this improvement provides insights into the returns of investing in additional information.

We can further our understanding of the value of information by comparing \eqref{eq:assets_bound} with the number of defender assets needed when there is no visibility constraint.

First, note that if the defender has full visibility (i.e., $k \rightarrow \infty$), then in any network $\mathcal{G}$ we have that $X \geq 3Y$ defender assets are necessary and sufficient to guarantee path guarding.
This follows when the sensing distance is great enough,
as for any sensing distance $k$ such that $2k+3 > |\path|$, then $X =3Y$ is necessary and sufficient from \eqref{eq:assets_bound}.

We can see that the defender need only have sensing distance $k = |\mathcal{P}|/2$ to guarantee defense with the same number of assets as the full visibility case.
Accordingly, in \cref{fig:assets_needed} we see that number of needed defender assets quickly decreases and saturates for $k \geq |\mathcal{P}|/2$.

\section{Proof of Theorem~\ref{thm:path_guard_vis}}\label{sec:proof}

To prove \cref{thm:path_guard_vis}, we start by assuming the adversary possesses a single, unsplittable asset; as such $Y=1$ and $X$ only takes integer values. 
We will show in \cref{subsec:gen} that the results from this approach can be generalized to the case where any number of adversary assets can be split into multiple subgroups, including fractional assets.

The proof proceeds in two major parts. In subsection~\ref{subsec:suff}, we construct a strategy which shows that~\eqref{eq:assets_bound} is sufficient for guarding, and thus, an upper bound on the number of defenders required. In subsection~\ref{subsec:nec}, we show that~\eqref{eq:assets_bound} is necessary for guarding, and thus, a lower bound on the number of defenders required. Together, these results quantify the relationship between the sensing distance and the amount of resources required to defend $\path$.
The proof of \cref{thm:path_guard_vis} relies on several lemmas, the proofs of these lemmas can be found in an online appendix.

\begin{figure}[t]
    \vspace{2mm}
    \centering
    \includegraphics[width=.49\textwidth]{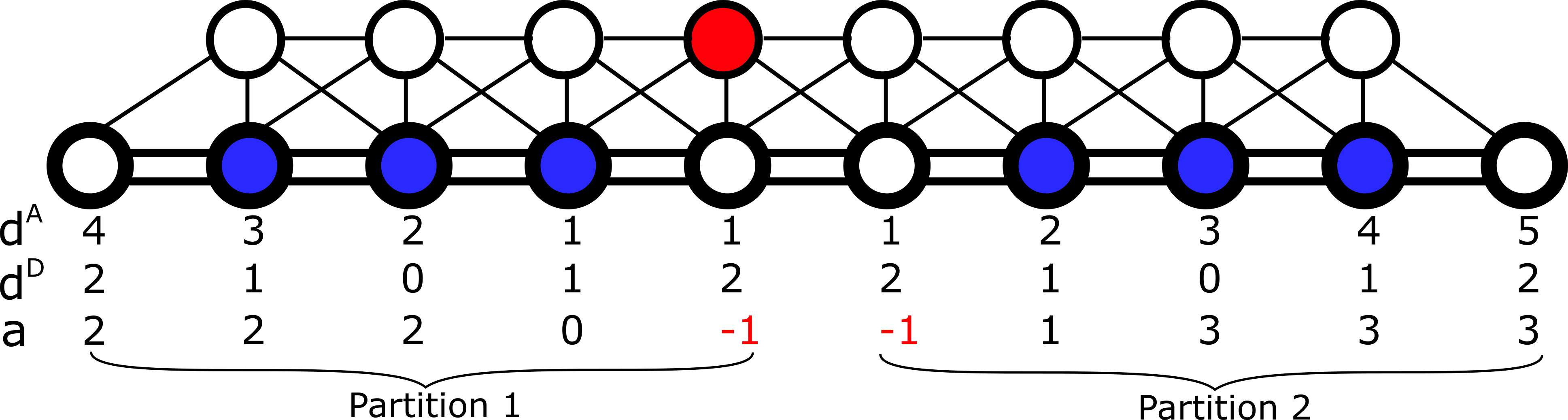}
    \caption{Example illustrating the distance $d^A$ from a single adversary, distance $d^D$ from the partition's center defender, and calculated advantage $a$ for each path node. Here, path nodes are shown by the bold circles at the bottom and are connected by double lines. Nodes with defender units are filled with blue, and the adversary's node is filled with red. Note that within each partition, defenders move as a platoon. In this example, $k=1$ and so the partition size is $2\cdot1 + 3 = 5$. Since there are negative advantage values (highlighted in red), the defenders in each partition must move towards the center so as to prevent the adversary from winning the game.}
    \label{fig:advantage_example}
    \vspace{-5mm}
\end{figure}

\subsection{Sufficient Algorithm}\label{subsec:suff}
In order to show that \eqref{eq:assets_bound} is sufficient, we present a defender algorithm which ensures that all nodes are defended. First, we split $\path$ into disjoint partitions of size $2k+3$ each. There may be one partition smaller than $2k+3$ if $\bmod(|\path|, 2k+3) \neq 0$. For partition $\omega$, we denote the center index as $c_\omega$, i.e. $p_{c_\omega}$ is the center node of the partition.  We call a \emph{platoon} a group of three unit defenders in consecutive nodes within a  partition, which always move together and maintain single spacing. The node index for the center of the platoon is termed $l_{\omega}$, and therefore one unit of defenders will be distributed at each of $p_{l_\omega - 1}, p_{l_\omega}, $ and $p_{l_\omega + 1}$. An example of this distribution is shown in Figure~\ref{fig:advantage_example}. For the special case where the partition is of size 1 or 2, there is one unit defender asset placed on each node within the partition, and the assets do not move. As a result, the nodes within this smaller partition are always guarded and so we do not consider this partition in our analysis below.


Given a path node $p_i$ and the adversarial asset's node location $v_A$, define the minimum adversary distance $d^A_i$ as
\begin{equation}
    d^A_i = d(p_i, v_A).
\end{equation}
Since the position of the defender's asset may vary over time, $d^A_i$ may also change over time but we omit this dependence for notational brevity when considering a single timestep. If the adversary's asset is unobserverable then $d^A_i = \infty$. Similarly, define the minimum defender distance $d^D_i$ as the minimum distance of the platoon center to a path node $p_i$ within the platoon's partition, i.e.
\begin{equation}
    d^D_i = d(p_i, l_\omega)
\end{equation}
for $p_i$ in partition $\omega$.
Then, we define the \emph{advantage} $a_i$ at path node $p_i$ as
\begin{equation}
    a_i = d^A_i - d^D_i.
\end{equation}
Example values of $d^A_i, d^D_i,$ and $a_i$ can be found in Figure~\ref{fig:advantage_example}. 
\begin{lemma} \label{lem:atk_win_nec}
Assume that the defender uses platoons to defend each partition. It is necessary in any adversary winning configuration that $a_i \leq -2$ for some $i \in [1, |\path |]$.
\end{lemma}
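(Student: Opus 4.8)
The plan is to derive the statement directly from the definition of an adversary win, under the reduction made at the start of \cref{sec:proof} that $Y=1$, so the adversary controls a single unsplittable unit and the defender's per-node count is an integer. A configuration in which the adversary has won is one with $[x,y]\notin\safeset$, i.e.\ $y_{p_i}>x_{p_i}$ for some path node $p_i$; since the adversary has only one asset, this forces that asset to sit exactly on $p_i$, so that $d^A_i=d(p_i,p_i)=0$ (and, as $p_i\in\path\subseteq\visregion$ for every $k\ge 0$, the asset is observable, hence $d^A_i$ is finite), while $x_{p_i}<1$.

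Next I would localize the winning node to a partition that is actually defended by a platoon. The defender units assigned to a partition of size $1$ or $2$ are never moved, with one unit placed on each of its nodes, so every such node permanently has $x_v=1$ and cannot be the winning node; hence $p_i$ lies in a partition $\omega$ of size $2k+3\ge 3$, which by the lemma's hypothesis is held by a platoon. A platoon places exactly one unit on each of the three consecutive nodes $p_{l_\omega-1},p_{l_\omega},p_{l_\omega+1}$ and nothing else inside $\omega$; moreover, since the partitions are disjoint and a platoon never leaves its own partition, no defender unit from another partition sits in $\omega$ either. Thus $x_{p_i}\in\{0,1\}$, and $x_{p_i}<1$ forces $x_{p_i}=0$, i.e.\ $p_i\notin\{p_{l_\omega-1},p_{l_\omega},p_{l_\omega+1}\}$.

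Finally I would read off $d^D_i$. Because $\path$ is a shortest path between $S$ and $T$ (\cref{as:shortest_path}), the graph distance between any two path nodes equals the difference of their path indices, so $d^D_i=d(p_i,p_{l_\omega})=|i-l_\omega|\ge 2$, the inequality holding because $p_i$ is not one of the three platoon nodes. Combining with $d^A_i=0$ gives $a_i=d^A_i-d^D_i=-d^D_i\le -2$, which is the claim. The argument is essentially definitional, so the one place that needs care is this bookkeeping: excluding the size-$1$ and size-$2$ partitions so that the platoon hypothesis genuinely applies at the winning node, and ruling out overlapping platoons so that each path node carries at most one defender unit. Handling these correctly is precisely what forces the threshold to $-2$ (the distance from a platoon's center to the nearest node it fails to cover) rather than to $-1$.
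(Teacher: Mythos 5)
Your proposal is correct and follows essentially the same route as the paper's proof: the single unsplittable adversary asset must sit on the winning path node, giving $d^A_i=0$, and that node cannot be one of the three platoon nodes, giving $d^D_i\geq 2$ and hence $a_i\leq -2$. Your additional bookkeeping (excluding the static size-$1$ and size-$2$ partitions and using the shortest-path property to equate graph distance with index difference) merely makes explicit what the paper leaves implicit.
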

\begin{proof}
For the adversary to win, it must place its asset on some path node $p_i \in \path$, and therefore, $d^A_i = 0$. 

We observe that $d^D_i \geq 2$, since otherwise $p_i$ would be occupied by one of the three defender assets in the partition's platoon. Therefore $a_i \leq -2$.
\end{proof}
\begin{corollary} \label{cor:sufficient_condition}
Since Lemma~\ref{lem:atk_win_nec} describes a necessary condition for the adversary to win, a sufficient condition for the defender to successfully defend indefinitely is
\begin{equation}
    a_i \geq -1 \ \forall i \in [1, |\path | ],\forall t.
\end{equation}
\end{corollary}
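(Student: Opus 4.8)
The plan is to obtain \cref{cor:sufficient_condition} as the direct logical contrapositive of \cref{lem:atk_win_nec}, applied at every timestep. First I would recall that, by the rules of the game, the configuration $[\dstate(t),\astate(t)]$ is declared an adversary win exactly when it fails to lie in $\safeset$, i.e.\ precisely when it is an ``adversary winning configuration'' in the sense of \cref{lem:atk_win_nec}. That lemma asserts that, whenever the defender plays the platoon strategy, any such configuration must satisfy $a_i \le -2$ for at least one index $i \in [1,|\path|]$. Negating the statement gives: if $a_i \ge -1$ holds for \emph{every} $i \in [1,|\path|]$ in a given configuration, then that configuration is not an adversary winning configuration, hence $[\dstate,\astate] \in \safeset$.

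Second, I would quantify this over time. Suppose the defender's platoon policy is such that $a_i(t) \ge -1$ for all $i \in [1,|\path|]$ and all $t$. Then by the previous paragraph, at no timestep is the current configuration an adversary winning configuration, so $[\dstate(t),\astate(t)] \in \safeset$ for every $t$, which is by definition what it means for the defender to guard $\path$ indefinitely. This is the claimed sufficient condition.

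I do not expect a genuine obstacle here: the entire substantive content has already been established in \cref{lem:atk_win_nec}, and the corollary is essentially a restatement of its contrapositive together with the observation that ``adversary win'' and ``$\notin \safeset$'' coincide. The only point requiring minor care is keeping the logical direction straight --- \cref{lem:atk_win_nec} is a \emph{necessary} condition for an adversary win, so its negation yields a \emph{sufficient} condition for the absence of a win at a single step, and one then takes the conjunction over all $t$. The harder task --- constructing an explicit platoon control policy that in fact maintains $a_i \ge -1$ for all $i$ and all $t$ whenever \eqref{eq:assets_bound} holds --- is separate and is carried out in the remainder of \cref{subsec:suff}.
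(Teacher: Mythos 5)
Your proposal is correct and matches the paper's (implicit) reasoning exactly: the corollary is just the contrapositive of \cref{lem:atk_win_nec} applied at every timestep, together with the identification of ``adversary win'' with leaving $\safeset$. The only unstated detail in both your argument and the paper's is that negating ``$a_i \le -2$ for some $i$'' yields ``$a_i \ge -1$ for all $i$'' because the advantages are integer-valued differences of graph distances.
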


\begin{figure}
 \vspace{1mm}
 \removelatexerror
  \begin{algorithm}[H]
  \label{alg:defender_step}
  \caption{DefenderStep transition function.}
\KwIn{Adversary position $n_A$, current platoon location indices $L = \{l_1, \dots, l_{|L|}\}$}
\KwOut{Updated platoon locations $L$}
\SetKwFunction{DEFENDERSTEP}{DefenderStep}
    \SetKwProg{Fn}{Function}{:}{}
    \For{Each partition $\{p_i, \dots, p_j\}$ with index $\omega$}
    {
    Compute advantages $\{a_i, \dots, a_j\}$\\
    Compute advantage frontier $\frontier$\\
    \If{$a_i > 0 \ \forall a_i \in \frontier$}
    {
        $l_\omega \longleftarrow l_\omega + \sign(c_\omega - l_\omega)$
    }
    \ElseIf{$\exists a_k \in \{a_i, \dots, a_j\}$ s.t. $a_k < 0$}{
    $l_\omega \longleftarrow l_\omega + \sign(k - l_\omega)$\\
    $l_\omega \longleftarrow \max(\min(l_\omega, j-1), i+1)$
    }
    }
    \textbf{Return} $L$
  \end{algorithm}
\vspace{-5mm}
\end{figure}

\begin{figure}
 \removelatexerror
  \begin{algorithm}[H]
  \label{alg:initialization}
   \caption{Initialization procedure.}
\KwData{{Adversary position $n_A$, platoon location indices $L = \{l_1, \dots, l_{|L|}\}$}}
Observe $n_A$\\
Initialize $L$ such that $l_\omega = c_\omega \ \forall \omega$\\
\While{$\exists a_i < -1$}
{
    $L \longleftarrow$ \Call{DefenderStep}{$n_A, L$}\\
}
Initialize platoons according to $L$\\
  \end{algorithm}
\vspace{-5mm}
\end{figure}

\textbf{Defender Strategy:} The defenders in each partition calculate the advantage values independently for the nodes in their own partitions to decide how to transition at each timestep. If the advantage value for any node in its partition is negative, then platoon moves towards that node along $\path$. Otherwise, the platoon considers the advantage values of all nodes (inclusive) between its own center and the closest partition boundary, which we call the set of frontier advantages $\frontier$. If every advantage value in $\frontier$ is positive, the platoon moves towards the middle of the partition. If neither of these conditions holds, then the defenders remain at their current position. Note that platoons are restricted to move within their partition, i.e. if partition $\omega$ consists of nodes $p_i$ through $p_j$ with $i < j$ then $l_\omega \in [i+1, j-1]$. This procedure, repeated once every timestep, is described in Algorithm~\ref{alg:defender_step}.

\textbf{Defender Initialization:} To initialize the defender positions, the defender may simply repeat this procedure until all advantage values are greater than or equal to $-1$. Then, the resulting configuration is selected as the initial distribution. This approach is described in Algorithm~\ref{alg:initialization}.

With this strategy, the platoon only moves in one of two directions, either towards $S$ or $T$. Because of this, it is not clear what should be done when negative advantage values appear on both sides of a platoon (i.e. if there is a negative advantage between the platoon and $S$, and also between the platoon and $T$). We will now show that this situation will never arise under the assumption that $\path$ is a shortest path.


\begin{lemma} \label{lem:def_move_valid}
Assume that $(\path,\graph)$ satisfies~\eqref{eq:shortest_path}. Then, for each partition, the defense strategy specified by Algorithm~\ref{alg:defender_step} will have negative advantage values on at most one side of the platoon within that partition. If a negative advantage value exists on one side of the platoon, the advantage values on the opposite side must all be positive.
\end{lemma}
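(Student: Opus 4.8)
The plan is to prove a purely geometric fact that never refers to the dynamics of \cref{alg:defender_step}: for \emph{any} platoon‑center position $p_{l_\omega}$ inside a partition $\omega$ and \emph{any} location $v_A$ of the single adversarial asset, the advantages cannot be negative both at some $p_i$ with $i<l_\omega$ and at some $p_j$ with $j>l_\omega$; moreover, if $a_i<0$ at some $p_i$ on one side of $p_{l_\omega}$, then $a_j>0$ at every partition node $p_j$ on the other side. Since this holds for every configuration, it holds in particular throughout the runs of \cref{alg:defender_step} and \cref{alg:initialization}, so no reasoning about the algorithm's evolution is needed.

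First I would isolate where \cref{as:shortest_path} enters: for any two path nodes $p_i,p_j$, $d(p_i,p_j)=|i-j|$. The ``$\le$'' direction is immediate by walking along $\path$; for ``$\ge$'', taking $i<j$ and concatenating shortest $S$--$p_i$, $p_i$--$p_j$, and $p_j$--$T$ walks gives $|\path|-1=d(S,T)\le d(S,p_i)+d(p_i,p_j)+d(p_j,T)\le(i-1)+d(p_i,p_j)+(|\path|-j)$, i.e. $d(p_i,p_j)\ge j-i$. Hence $d^D_i=d(p_i,p_{l_\omega})=|i-l_\omega|$ for every path node in the partition, and $a_i=d(v_A,p_i)-|i-l_\omega|$.

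Next, if the adversary is unobservable every $a_i=+\infty$ and nothing needs to be shown; otherwise I treat $v_A$ as a fixed vertex (any $a_i=+\infty$ is automatically positive, so only finite advantages matter). For the ``at most one side'' claim I argue by contradiction: if $a_m<0$ for some $m<l_\omega$ and $a_n<0$ for some $n>l_\omega$, then $d(v_A,p_m)<l_\omega-m$ and $d(v_A,p_n)<n-l_\omega$, so the triangle inequality through $v_A$ together with $d(p_m,p_n)=n-m$ gives
\[
 n-m=d(p_m,p_n)\le d(p_m,v_A)+d(v_A,p_n)<(l_\omega-m)+(n-l_\omega)=n-m,
\]
a contradiction. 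For the sharper ``opposite side all positive'' claim I take $a_m<0$ with $m<l_\omega$ (the case $m>l_\omega$ is symmetric), and for any partition node $p_n$ with $n>l_\omega$ the triangle inequality gives $d(v_A,p_n)\ge d(p_m,p_n)-d(p_m,v_A)>(n-m)-(l_\omega-m)=n-l_\omega$, hence $a_n>0$. I would close by recalling that the (at most one) size‑$1$ or size‑$2$ partition is excluded from the statement, since each of its nodes carries a stationary defender unit and is therefore always guarded.

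I do not expect a genuine obstacle: the whole proof is two one‑line triangle‑inequality estimates once the identity $d(p_i,p_j)=|i-j|$ is in hand. The points requiring care are turning ``$a_i<0$'' into the exact distance bound (with the right strict inequality), and recognizing that \cref{as:shortest_path} is precisely what forbids a shortcut that would otherwise make $d^D_i$ strictly smaller than $|i-l_\omega|$ and invalidate the estimate.
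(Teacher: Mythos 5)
Your proposal is correct and follows essentially the same route as the paper: both arguments combine the triangle inequality through $v_A$ with the fact that, because $\path$ is a shortest $S$--$T$ path, the distance between two path nodes on opposite sides of the platoon center equals the sum of their along-path distances to $l_\omega$, yielding the contradiction $d^A_i + d^A_j < d^D_i + d^D_j = d(p_i,p_j) \le d^A_i + d^A_j$. Your write-up is somewhat more explicit (isolating the identity $d(p_i,p_j)=|i-j|$ and splitting the two claims, which the paper folds into a single ``negative on one side, nonpositive on the other'' contradiction), but the underlying idea is identical.
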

\begin{proof}
\iflong
Suppose, towards a contradiction, that there are negative advantage values on one side of the platoon and nonpositive advantage values on the other side. Without loss of generality, denote the path vertex with negative advantage $p_i$ and path vertex with nonpositive advantage $p_j$. Call the center vertex of the platoon $l_\omega$ and the vertex of the adversary $v_A$. Since $a_i < 0$ and $a_j \leq 0$, we can say that $d^A_i < d^D_i$ and $d^A_j \leq d^D_j$ directly from the definition of the advantage. Adding these two inequalities yields $d^A_i + d^A_j < d^D_i + d^D_j$, meaning that the length of the path from $p_i$ to $p_j$ through $v_A$ is shorter than the path through $l_\omega$. Since $l_\omega$ must be located on $\path$ between $p_i$ than $p_j$, this contradicts the assumption that $\path$ is a shortest path and therefore negative advantage values will never appear on both sides of the platoon. If negative advantage values do appear, the advantage values within the partition on the other side of the platoon must be positive by similar reasoning.
\else \lemmaproofonline \fi
\end{proof}
\begin{corollary} \label{cor:def_overshoot}
When a platoon moves one step towards a negative advantage value $a_i$, no negative advantage values will appear on the opposite side (from $a_i$) of the platoon within the partition.
\end{corollary}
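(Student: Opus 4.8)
The plan is to show that moving the platoon one step toward a negative-advantage node $a_i$ can only increase, or at worst leave unchanged, the advantage of every node on the far side of the platoon — so in particular no far-side advantage can drop below zero given that (by \cref{lem:def_move_valid}) every far-side node started with a positive advantage. First I would fix a partition $\omega$ and the current platoon center $l_\omega$, and suppose without loss of generality that $p_i$ lies on the $S$-side of the platoon, so the step sends $l_\omega \mapsto l_\omega' = l_\omega - 1$ (moving toward $S$). Let $p_m$ be an arbitrary node strictly on the $T$-side of the platoon within the partition. The key observation is purely about distances along the path: since $p_m$ is on the $T$-side of $l_\omega$, moving the center one step toward $S$ moves it one step farther from $p_m$, i.e. $d^D_m$ (measured as $d(p_m, l_\omega)$ restricted to the partition) increases by exactly $1$, while $d^A_m$ is unaffected by the defender's move. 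Hence $a_m' = d^A_m - d^D_m{}' = a_m - 1$.

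That looks like it could push a far-side advantage down, so the real content is to combine it with \cref{lem:def_move_valid}: the hypothesis of the corollary is precisely that there is a negative advantage ($a_i<0$) on the $S$-side, and \cref{lem:def_move_valid} then forces \emph{every} advantage on the $T$-side to be strictly positive, i.e. $a_m \geq 1$ (advantages are integers in the unsplittable-asset setting we have reduced to). Therefore $a_m' = a_m - 1 \geq 0$, so no negative advantage appears on the far side after the step. I would also need to handle the boundary clamp in Algorithm~\ref{alg:defender_step} (the line $l_\omega \leftarrow \max(\min(l_\omega,j-1),i+1)$): if the unclamped step would leave the partition, the platoon simply does not move, the far-side advantages are unchanged, and the claim is immediate; otherwise the analysis above applies verbatim.

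The main obstacle — and the step I would be most careful about — is making precise the claim ``$d^D_m$ increases by exactly $1$.'' This relies on the fact that the relevant distances here are along the path $\path$ (the platoon and all partition nodes are path nodes, and by Assumption~\ref{as:shortest_path} path distance coincides with graph distance between path nodes), so that $d(p_m, l_\omega) = |m - l_\omega|$ in index terms and the monotonicity in the center index is exact rather than merely an inequality. I would state this as a one-line consequence of Assumption~\ref{as:shortest_path} (already used the same way in the proof of \cref{lem:def_move_valid}) and then the rest is the short arithmetic above. A symmetric argument covers the case $p_i$ on the $T$-side.
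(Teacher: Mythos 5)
Your proposal is correct and follows essentially the same route as the paper: invoke \cref{lem:def_move_valid} to get strict positivity of all opposite-side advantages, observe that a single one-hop platoon move changes each advantage by at most one (you make this exact, $a_m' = a_m - 1$, using that path distances between path nodes are index differences under Assumption~\ref{as:shortest_path}), and conclude nonnegativity by integrality. Your added care about the boundary clamp and the exact distance computation only makes explicit what the paper's two-sentence argument leaves implicit.
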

From Lemma~\ref{lem:def_move_valid}, we know that when a platoon moves towards negative advantage values, the advantage values within the partition on the opposite side must all be positive. Since the defender can only change any advantage value by $1$ with a single platoon move, the advantage values in question must be nonnegative after the move.

\begin{lemma} \label{lem:restore_original_config}
If the adversary moves out of $\visregion$ at timestep $t$, the platoons move back to the center of their partitions at timestep $t+1$.
\end{lemma}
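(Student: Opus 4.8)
The plan is to read the defender's behaviour straight off Algorithm~\ref{alg:defender_step} once the adversary's asset has left the observable region. First I would note that if the (single) adversary asset sits at a node outside $\visregion$, then the observable state $\widehat{y}(t)$ is zero on every coordinate; in particular the adversary is unobservable, so by the stated convention $d^A_i = \infty$ and hence the advantage $a_i = d^A_i - d^D_i$ is infinite for every path node $p_i \in \path$. Consequently, for every partition $\omega$ the test ``$a_i > 0$ for all $a_i \in \frontier$'' in Algorithm~\ref{alg:defender_step} is satisfied, so the update $l_\omega \leftarrow l_\omega + \sign(c_\omega - l_\omega)$ executes: every platoon steps one node toward the center $p_{c_\omega}$ of its own partition, and a platoon already centered stays put. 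Iterating this over $t+1, t+2, \dots$ then shows that the platoons remain centered for as long as the adversary stays outside $\visregion$.

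To upgrade this to the claim that every platoon is \emph{exactly} centered at timestep $t+1$, I would show that at the instant the adversary exits $\visregion$ each platoon is already within one node of its center, so that the single step above lands it on $p_{c_\omega}$. The key is to track the drift $\delta_\omega := |l_\omega - c_\omega|$ against the adversary's path-distance $d^*(v_A)$ along the adversary's trajectory. Since each platoon is confined to the interior of its own partition and one platoon move changes any advantage by at most one, a drift of size $\delta$ can only be produced while $d^*(v_A) \le k+1-\delta$; and since $d^A_i \ge d^*(v_A)$ for every path node by definition of $d^*$, once $d^*(v_A)$ grows past that threshold every advantage on the boundary-facing side of the platoon (i.e.\ every entry of $\frontier$) is at least $1 > 0$, so Algorithm~\ref{alg:defender_step} pulls the platoon back toward $p_{c_\omega}$ at least as fast as $d^*(v_A)$ increases. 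Because the adversary must reach $d^*(v_A) \ge k+1$ to leave $\visregion$ and can increase $d^*(v_A)$ by at most one per timestep, running this comparison along its (necessarily multi-step) retreat yields $\delta_\omega \le 1$ by the time it is outside $\visregion$.

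The main obstacle is exactly this last piece of bookkeeping. One has to be careful about the Stackelberg timing — the defender reacts to where the adversary currently is, before the adversary's next move — and about a possibly non-monotone retreat in which the adversary repeatedly dips back toward the path and re-pushes a platoon, and then verify that the platoon's return never falls more than one node behind the adversary's departure. Here I would invoke Assumption~\ref{as:shortest_path} through Lemma~\ref{lem:def_move_valid} to rule out negative advantages on both sides of a platoon, so that the ``pull toward center'' move is never in conflict with a ``push toward a boundary'' move during the retreat. Everything else — the identification $d^A_i = \infty$ and the resulting branch of Algorithm~\ref{alg:defender_step} — is immediate from the definitions.
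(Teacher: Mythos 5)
Your proposal is correct and follows essentially the same route as the paper: both arguments rest on the invariant that, after each defender move, the platoon's displacement from its partition center is at most $k+1-d^*(v_A)$, verified by tracking unit increases and decreases of $d^*(v_A)$, so that the drift is at most one when the adversary is at distance $k$ and a single recentering step (triggered because $d^A_i=\infty$ makes every frontier advantage positive) completes the return at time $t+1$. The ``bookkeeping'' you flag as the main obstacle is exactly what the paper's proof carries out via its two-case induction on changes of $d^*(v_A)$.
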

\begin{proof}
\iflong

First, we will show that the distance of the platoon center $p_l$ from its partition center $p_c$ is bounded by a function of $d^*(v_A)$ where $v_A$ is the node location of the adversary. In particular, we claim that $d(p^+_l, p_c) \leq k - d^*(v_A) +1$, where $p^+_l$ is the center of the platoon after the defender moves its assets at any timestep. We see that this is true when $d^*(v_A) = k$, as the only vertices with possibly negative advantage values lie on the end of the partition. For all other vertices, $d^D_i \leq k$ and so $a_i \geq 0$. Therefore, the platoon will move at most 1 step away from the center to decrease $d^D_i$ from $k+1$ to $k$ for the end vertices, and we have $d(p^+_l, p_c) \leq k - k +1 = 1$.

To show that this is true for all $d^*(v_A)$, we show that increasing or decreasing the value of $d^*(v_A)$ preserves the bound. First, suppose that $d^*(v_A)$ decreases from $r$ to $r-1$. In the worst case, the platoon will be located at the edge of the bound such that $d(p_l, p_c) = k-r+1$, since all other initial locations will obey $d(p^+_l, p_c) \leq k-r+1$ after any 1-hop move. Therefore, at worst the platoon will move 1 more step away from $p_c$, resulting in a distance from the center of $d(p_l, p_c) = k-r + 1 + 1 = k - (r-1) + 1$, showing that the bound is preserved.

Now suppose $d^*(v_A)$ increases from $r$ to $r+1$. Consider the platoon at the edge of the proposed bound such that $d(p_l, p_c) = k-r+1$. We know that since the partition has $k+1$ nodes to either side of $p_c$, the distance to the closer end of the partition is bounded by $r$. Recall from \eqref{eq:visibilityRegion} that $d^*(v) \in [0, k] \ \forall v \in \visregion$ and  $d^*(v) \leq d(p_i, v) \ \forall p_i \in \path$. Therefore, when $d^*(v_A)$ increases from $r$ to $r+1$, $d^A_i \geq r+1$ for all path nodes and therefore the frontier advantage values are all positive, i.e. $a_i > 0 \ \forall \ a_i \in \frontier$. This means that under the proposed control law, the platoon will move towards $p_c$. If we instead consider a platoon located 1 step away from the boundary (i.e. $d(p_l, p_c) = k-r$) we see that the frontier advantages would be nonnegative, and so the platoon will not move towards its closest partition boundary. In either case, $d(p^+_l, p_c) \leq k - r$. Since all other initial platoon positions will satisfy $d(p^+_l, p_c) \leq k - r$ regardless of how they move, the bound holds when $d^*(v_A)$ decreases.

Since we know that the inequality $d(p^+_l, p_c) \leq k-d^*(v_A) + 1$ holds for the case when $d^*(v_A) = k$ and when $d^*(v_A)$ changes, we know it holds for all $d^*(v_A)$. Notice that before the adversary leaves the sensing radius, $d^*(v_A) = k$ and so $d(p^+_l, p_c) \leq 1$, implying that the platoon can return to $p_c$ within 1 step if the adversary leaves $\visregion$.
\else
\lemmaproofonline
\fi
\end{proof}

\begin{lemma} \label{lem:init_advantage}
Using Algorithm~\ref{alg:initialization}, the defender can achieve a starting configuration such that $a_i \geq -1 \ \forall \ i \in [1, |\path |]$ in finite time.
\end{lemma}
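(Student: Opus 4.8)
The plan is to show that Algorithm~\ref{alg:initialization}, which starts each platoon at its partition center and then repeatedly calls \texttt{DefenderStep} until no advantage value is below $-1$, terminates after finitely many iterations. Since the adversary does not move during initialization (only $n_A$ is observed once), the only thing changing between iterations is the platoon location vector $L$, and each coordinate $l_\omega$ lives in the finite index set $[i_\omega+1, j_\omega-1]$ for its partition. So the state space of the initialization loop is finite; it suffices to argue that \texttt{DefenderStep} makes monotone progress toward a configuration satisfying $a_i \ge -1$ for all $i$, so the loop cannot cycle.

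First I would fix the adversary's node $v_A$ and restrict attention to a single partition $\omega$ (the partitions evolve independently, and the special size-$1$-or-$2$ partition is trivially already guarded). For that partition, consider the most negative advantage value, or equivalently track the minimum over $i$ in the partition of $a_i$. By Corollary~\ref{cor:def_overshoot} and Lemma~\ref{lem:def_move_valid}, whenever a negative advantage value is present it is on only one side of the platoon, and moving one step toward it is well-defined and does not create a new negative value on the far side; moreover that single step increases every advantage value on that side by exactly $1$ (since $d^D_i$ decreases by $1$ for nodes on that side), in particular it strictly increases the minimum advantage in the partition. The minimum advantage is integer-valued (we are in the $Y=1$, integer case) and bounded above (by, say, $d^A_i - 0$ for the node closest to $v_A$, or simply by the partition radius $k+1$), so it can strictly increase only finitely many times. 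Hence after finitely many \texttt{DefenderStep} calls the partition has no advantage below $-1$; I would note that once it reaches this state, \texttt{DefenderStep} either leaves the platoon put or moves it toward the partition center, and in neither case does it push any advantage below $-1$ again (moving toward center changes advantages by at most $1$ from a value that was $\ge 0$ on the relevant frontier), so the condition, once achieved for a partition, is preserved.

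Taking the maximum of the (finite) stopping times over all partitions gives a finite bound on the total number of iterations of the \texttt{while} loop, after which $a_i \ge -1$ holds for every $i \in [1,|\path|]$ simultaneously; this is the claimed starting configuration. The main obstacle I anticipate is making the ``strictly increases the minimum advantage'' step fully rigorous: one must be careful that when the platoon is pushed against a partition boundary (the clamp $l_\omega \leftarrow \max(\min(l_\omega, j-1), i+1)$ in Algorithm~\ref{alg:defender_step}), it genuinely cannot be forced to stall with a negative advantage still present — here is exactly where the shortest-path assumption, via Lemma~\ref{lem:def_move_valid} and the bound $d^D_i \le k+1$ inside a partition of size $2k+3$, is needed to guarantee that a node with $a_i < -1$ cannot lie at the far boundary while the platoon is jammed at the near one. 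I would discharge this by the same inequality-adding argument used in the proof of Lemma~\ref{lem:def_move_valid}.
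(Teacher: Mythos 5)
Your proposal is correct and follows essentially the same route as the paper's proof: invoke Lemma~\ref{lem:def_move_valid} and Corollary~\ref{cor:def_overshoot} to show each step is well-defined and creates no new deficits, note that the platoon center can reach within one hop of every node in its partition (so a clamped platoon never leaves a node with $a_i<-1$ on the clamped side, since that node already has $d^D_i=1$), and conclude termination from finiteness. Your explicit monotonicity argument (the integer-valued minimum advantage in each partition strictly increases by $1$ per step while negative, and the condition is preserved once achieved) is a slightly more rigorous rendering of the paper's "repeat until done" termination claim, not a different approach.
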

\begin{proof}
\iflong
Because of Lemma~\ref{lem:def_move_valid}, we know that each defender step is possible since there will never be negative advantage values on both sides of the platoon. As Corollary~\ref{cor:def_overshoot} states, we are also guaranteed that moving a platoon towards a negative advantage will never result in negative advantage values on the other side of the platoon after the move is made.

Additionally, the platoon will never have to leave its partition to achieve the stated condition since the platoon center can be moved to within one hop of any node in the partition. Therefore $d^D_i \leq 1$ can be achieved for any single partition node, bounding the advantage as $a_i \geq -1$. Thus, we conclude that repeatedly moving the platoons in the direction of negative advantage values will eventually result in a state where $a_i \geq -1 \ \forall \ i \in [1, |\path |]$. Since the number of nodes in the partition is finite, we also conclude that Algorithm~\ref{alg:initialization} terminates in finite time.
\else \lemmaproofonline \fi
\end{proof}

Since the proposed algorithm requires no more than the amount of resources specified by \eqref{eq:assets_bound}, we will now prove sufficiency for Theorem~\ref{thm:path_guard_vis} by showing that the proposed algorithm guarantees path guarding.
\begin{proof}[Proof of sufficiency for Theorem~\ref{thm:path_guard_vis}]
From Lemma~\ref{lem:init_advantage} we know that initially $a_i \geq -1 \ \forall \ i \in [1, |\path |]$.  During its turn, the defender can move towards the node with negative advantage value to restore the advantage values from $-1$ to $0$ if the adversary is not at the end of the partition. This is possible since the negative advantage value only appears on one side of the platoon as shown in Lemma~\ref{lem:def_move_valid}. Negative advantage values will also not appear on the opposite site of the platoon after the move, as stated in Corollary~\ref{cor:def_overshoot}. Any adversary move can change the advantage value at each node by at most $1$, and so the defender can repeat this procedure at every timestep with the same result.

For an adversary at the end of the partition, the platoon is not able to restore the advantage value for the edge of the partition to be nonnegative, as the platoon would have to move out of the partition to do so. However, in this case the partition is still guarded by one of the defender assets adjacent to the center of the platoon. The adversary is also not able to further decrease the advantage in this case, as either moving out of the partition or towards the platoon would increase the advantage value at the partition edge. 

Note that if the adversary asset leaves $\visregion$, it does not have to reappear at the same node from which it left. However, it is always true that $d*(v_A) = k$ whenever the asset first reappears after leaving $\visregion$, and from Lemma~\ref{lem:restore_original_config} we know that the $d^D_i \leq k+1$, so $a_i \geq -1$. Therefore, the defender can guarantee that $a_i \geq -1$ for all time. As stated in Corollary~\ref{cor:sufficient_condition}, this means that the defender can guard $\path$ indefinitely.
\end{proof}

\subsection{Necessary Defender Assets}\label{subsec:nec}
In this subsection, we show that \eqref{eq:assets_bound} is necessary to guarantee the defense of the path $\path$ when any additional graph structure can be realized outside of $\path$ such that no shorter path exists between $S$ and $T$.
In \cref{subsec:suff}, we provided a sufficient algorithm where we chose to place defender assets only on $\path$; we now show that any defense strategy that allocates assets throughout $\graph$ can be replicated by a defense strategy that allocates assets only on $\path$.
\begin{lemma}
Consider a defender policy $\pi^D$. The defender can guarantee the same defense of the path $\path$ in the graph $\mathcal{G}$ with the same number of assets $X$, by using a policy $\hat{\pi}^D$ that only has assets on $\path$ at each time step, i.e. $x_p(t) = 0$ if $p \notin \mathcal{P}$.
\end{lemma}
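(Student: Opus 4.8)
The plan is to realize $\hat\pi^D$ by \emph{projecting} the assets of $\pi^D$ onto $\path$ along breadth-first levels from $S$. For each vertex $v\in\mathcal V$ set $c(v)=\max\{1,\min\{|\path|,\,d(S,v)+1\}\}$ and define the projection $\phi(v)=p_{c(v)}\in\path$, extended to asset vectors by pushforward: $(\Phi x)_{p_i}=\sum_{v:\,\phi(v)=p_i}x_v$. The policy $\hat\pi^D$ maintains an internal shadow of the full-graph state $x^{\mathrm{sim}}(t)$, updates it using the observation $\widehat y(t)$ exactly as $\pi^D$ would, and at each step commands whatever transition takes its (path-supported) state from $\Phi x^{\mathrm{sim}}(t)$ to $\Phi x^{\mathrm{sim}}(t+1)$; the initialization mirrors $\pi^D$, i.e. $\hat\pi^D$ observes $y(0)$, runs $\pi^D$ to obtain $x(0)$, and starts from $\Phi x(0)$.

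First I would verify that $\Phi$ turns legal defender transitions into legal defender transitions, which rests on two facts, both consequences of \cref{as:shortest_path}. (i) $\phi$ fixes $\path$: since $d(S,T)=|\path|-1$, the triangle inequality forces $d(S,p_i)=i-1$, hence $c(p_i)=i$ and $\phi(p_i)=p_i$. (ii) $\phi$ is nonexpansive onto $\path$: if $(v,v')\in\mathcal E$ then $|d(S,v)-d(S,v')|\le1$, and clipping to $[1,|\path|]$ is nonexpansive, so $|c(v)-c(v')|\le1$; thus $\phi(v)$ and $\phi(v')$ are equal or consecutive on $\path$, and consecutive path nodes are joined by an edge of $\graph$ by definition of $\path$. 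Decomposing any admissible column-stochastic $K^D$ into the unit mass flows $v_j\!\to\!v_i$ it induces and pushing those flows through $\phi$ then produces a column-stochastic matrix $\hat K$ supported on edges of $\path$ (plus self-loops) with $\hat K\,(\Phi x)=\Phi(K^D x)$ and total mass $X$ preserved, so $\hat\pi^D$ is a well-defined path-only defender strategy using exactly $X$ assets.

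It then remains to show $\hat\pi^D$ defends $\path$ whenever $\pi^D$ does. The crucial monotonicity is $(\Phi x)_{p_i}\ge x_{p_i}$ for all $i$, which is immediate from (i) since the term $v=p_i$ appears in the sum defining $(\Phi x)_{p_i}$ and all summands are nonnegative. Now take any adversary playing against $\hat\pi^D$: because it only ever observes $\hat x(t)=\Phi x^{\mathrm{sim}}(t)$, a function of $x^{\mathrm{sim}}(t)$, the identical play is realizable against $\pi^D$ itself (with $x^{\mathrm{sim}}$ as the true defender state), and $\pi^D$ is assumed to guarantee $[x^{\mathrm{sim}}(t),y(t)]\in\safeset$ for every adversary and all $t$. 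Combining, $\hat x_{p_i}(t)=(\Phi x^{\mathrm{sim}}(t))_{p_i}\ge x^{\mathrm{sim}}_{p_i}(t)\ge y_{p_i}(t)$ for all $i$ and $t$, so $[\hat x(t),y(t)]\in\safeset$.

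The hard part is item (ii) together with the flow bookkeeping of step two: showing that pushing the one-hop defender dynamics through $\phi$ again yields admissible one-hop dynamics on $\path$. This is exactly where \cref{as:shortest_path} is essential — an arbitrary retraction of $\graph$ onto $\path$ need not exist, whereas the clipped BFS-level map does the job. The simulation argument in the last paragraph is conceptually delicate but routine, since $\pi^D$ and $\hat\pi^D$ both act on the limited observation $\widehat y$, so $\hat\pi^D$ can faithfully run $\pi^D$ internally.
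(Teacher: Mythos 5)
Your proof is correct, and it takes a genuinely different route from the paper's. The paper realizes the path-only policy by invoking its own sufficiency machinery: it treats each original (possibly off-path) defender asset as an adversary to be tracked and lets a shadow asset on $\path$ follow it via the platoon-center dynamics of Algorithm~\ref{alg:defender_step}, concluding that the shadow coincides with the original whenever the original sits on a path node. You instead build an explicit retraction $\phi$ of $\graph$ onto $\path$ from clipped BFS levels; the two facts you isolate --- $\phi$ fixes $\path$, and $\phi$ maps edges of $\graph$ to edges or self-loops of $\path$, both consequences of \cref{as:shortest_path} --- give the conjugacy $\hat K(\Phi x)=\Phi(K^D x)$, exact mass preservation, and the pointwise domination $(\Phi x)_{p_i}\ge x_{p_i}$, from which the safe-set inclusion follows by the replay argument. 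This buys two things the paper's sketch does not deliver cleanly: the asset count $X$ is preserved by construction rather than by appeal to an algorithm that nominally deploys three units per tracked target, and the lemma becomes self-contained rather than circularly dependent on the sufficiency analysis it is meant to underpin. The one formal wrinkle --- that $\hat\pi^D$ must carry the shadow state $x^{\mathrm{sim}}$ as memory, whereas policies are written in the paper as memoryless functions of $(x(t),\widehat y(t))$ --- is shared by the paper's own construction and is harmless, since $x^{\mathrm{sim}}$ is reconstructable from the deterministic initialization and the observation history.
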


\begin{proof}
\iflong
Consider an attacker trajectory $\{y(t)\}_{t=0}^T$ and defender trajectory $\{x(t)\}_{t=1}^T$ that results from the defender policy $\pi^D$, where $T$ is the termination round\footnote{If the policy $\pi^D$ can guarantee defense in perpetuity, let $T\rightarrow \infty$.}.
Consider that the defender wins the nodes in the path $D_t \subseteq \mathcal{P}$ at each time $t \geq 1$.

Now, we define a new defender trajectory $\{\hat{x}\}_{t=1}^T$ that wins the same nodes $\{D_t\}_{t=1}^T$  and only places assets on $\path$, i.e., $\hat{x}_p(t)\geq x_p(t)$ if $p \in \mathcal{P}$ for all $t \in \{1,\ldots,T\}$, and $\hat{x}_p(t)=0$ if $p \notin \mathcal{P}$ for all $t \in \{1,\ldots,T\}$.
We can realize a policy that gives such a trajectory by using the algorithm from \cref{subsec:suff}.
For every original defender asset, generate an initial allocation and policy using the center asset position of Algorithm~\ref{alg:defender_step} while treating the old defender as the adversary.
If the new defender updates after the original defender action in the same timestep, then \cref{subsec:suff} shows that the new defender will coincide with the old defender on $\path$ (defending the same nodes $\{D_t\}_{t=1}^T$) while never leaving $\path$.
\else
\lemmaproofonline
\fi
\end{proof}

Next, we show a winning condition for the attacker based on the defender's starting configuration.
\begin{lemma}\label{lem:defender_three_config}
A winning attacker strategy exists in some graph $\mathcal{G}$ if there exist three consecutive nodes $p_{\alpha-1},p_\alpha,p_{\alpha+1}$ in the path $\path$ which the defender cannot move an asset onto each of in $k$ defender actions.
\end{lemma}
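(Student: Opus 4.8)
The plan is to realize a graph $\mathcal{G}$ in which a ``hidden corridor'' delivers the attacker's unit, undetected, to a junction vertex adjacent to all three of $p_{\alpha-1},p_\alpha,p_{\alpha+1}$, and then to argue that the reaction window this leaves the defender --- exactly $k$ of its own transitions --- is, by hypothesis, too short to occupy all three of these nodes before the attacker seizes whichever one remains uncovered.

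\textbf{Constructing the graph.} Starting from $\path$ (together with any exterior structure one wishes to keep), adjoin a junction vertex $u$ joined by an edge to each of $p_{\alpha-1},p_\alpha,p_{\alpha+1}$, and then a pendant path $u=w_0-w_1-\cdots-w_k$. I would first verify that $\path$ is still a shortest $S$-$T$ path, so that \cref{as:shortest_path} is preserved: the only walks newly created either run into the dead end $w_1,\dots,w_k$ or pass through $u$, and the latter give only $p_{\alpha-1}\!-\!u\!-\!p_\alpha$ and $p_\alpha\!-\!u\!-\!p_{\alpha+1}$ (length $2$, versus $1$ originally) and $p_{\alpha-1}\!-\!u\!-\!p_{\alpha+1}$ (length $2$, equal to the original segment), none of which shortens $d(S,T)=|\path|-1$. (Here $u$ is adjacent to three \emph{consecutive} path vertices, consistent with the structural facts already recorded for path-guarding games, and one checks the surgery leaves every other vertex's $d^*$ unchanged.) By construction $d^*(w_j)=j+1$, so the corridor tip $w_k$ sits at path-distance $k+1$ and hence $w_k\notin\visregion$, while $u=w_0,\dots,w_{k-1}$ all lie in $\visregion$.

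\textbf{The attacker strategy.} Place the attacker's single unit on $w_k$ at $t=0$; this is a legitimate start, since no attacker asset lies on $\path$ (so $[x(0),y(0)]\in\safeset$ for every defender deployment) and $w_k\notin\visregion$ (so $\widehat y(0)$ reveals nothing about the attacker). The attacker then advances one hop toward the junction at every step, $w_k\to w_{k-1}\to\cdots\to w_0=u$, and is first perceived by the defender only on reaching $w_{k-1}$, the first corridor vertex inside $\visregion$. Reading off the within-timestep order of play --- defender transitions, then attacker, then the safe-set check --- one verifies that from the step at which the attacker is detected through the step at which it, sitting on $u$, can hop onto any of $p_{\alpha-1},p_\alpha,p_{\alpha+1}$, the defender performs exactly $k$ of its own transitions. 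By the preceding lemma the defender may be assumed to keep its assets on $\path$ (and adding the corridor does not help a $\path$-confined defender, since $u$ is the corridor's only path-neighborhood), so the hypothesis bites directly: from its configuration at the detection step, no sequence of $k$ defender transitions puts an asset on each of $p_{\alpha-1},p_\alpha,p_{\alpha+1}$ at once. Hence after the defender's last reaction move some $p_\beta\in\{p_{\alpha-1},p_\alpha,p_{\alpha+1}\}$ has $x_{p_\beta}<1$; the attacker moves its unit from $u$ onto $p_\beta$, yielding $y_{p_\beta}=1>x_{p_\beta}$, i.e.\ $[x,y]\notin\safeset$ and an attacker win. One should also note that the defender has no other recourse: assets pass through one another, so the corridor cannot be ``blocked''; guarding non-path vertices is worthless; and parking an asset on $u$ buys nothing --- occupying all three path nodes simultaneously is the defender's only hope, and it cannot achieve it in time.

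\textbf{Main obstacle.} The delicate point is the timing bookkeeping in the second step: establishing, against the precise order of play and the definition \eqref{eq:visibilityRegion} of $\visregion$, that detection at path-distance $k$ leaves the defender \emph{exactly} $k$ transitions (not $k-1$ or $k+1$), and confirming that this window coincides with the ``$k$ defender actions'' quantified in the hypothesis and that the hypothesis is invoked at the configuration the defender actually holds at detection. A couple of boundary cases then need a line each: $k=0$ (no corridor at all --- the attacker starts on $u$, which is already outside $\visregion$, and jumps straight onto $p_\beta$, the defender having had no informative transition), and triples $p_{\alpha-1},p_\alpha,p_{\alpha+1}$ that include $S$ or $T$ (still legitimate, and an attacker capture of $S$ or $T$ still violates $\safeset$).
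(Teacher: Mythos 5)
Your proposal is correct and follows essentially the same route as the paper: both construct a graph with a junction vertex adjacent to all three of $p_{\alpha-1},p_\alpha,p_{\alpha+1}$ that the attacker can reach in $k$ hops from outside $\visregion$, and both conclude that the $k$ informed defender transitions are, by hypothesis, insufficient to cover all three nodes before the attacker strikes the uncovered one. Your version is somewhat more careful than the paper's (explicit pendant-path corridor, verification that \cref{as:shortest_path} is preserved, and the timestep bookkeeping), but the construction and argument are the same.
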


\begin{proof}
\iflong
Consider three consecutive nodes $\{p_{\alpha-1},p_{\alpha},p_{\alpha+1}\}$ in the path $\path$ where the defender cannot move one defender asset onto each node within $k$ time steps.
Construct $\graph$, such that there exists a node $p_{\xi} \in \mathcal{V} \backslash \path$ with edges $(p_{\xi},p_{\alpha-1}),(p_{\xi},p_{\alpha}),(p_{\xi},p_{\alpha+1}) \in \mathcal{E}$ and $p_{\xi}$ and can be reached from a node outside the sensing horizon in $k$-hops.
Node $p_\xi$ is one hop off $\path$, thus the attacker can move their asset from outside the region of visibility to $p_\xi$ in $k$ attacker actions, i.e., if the attacker starts this movement at time $t$, their asset will reach $p_\xi$ at time $t+k$.
The defender will first detect the attacker at time $t+1$.
If the defender could not move assets onto each of $\{p_{\alpha-1},p_{\alpha},p_{\alpha+1}\}$ in $k$ time steps, then at least one of these nodes will be uncovered by a defender asset in time $t+k+1$.
In the attacker's action at $t+k+1$, they can move their asset to any of $\{p_{\alpha-1},p_{\alpha},p_{\alpha+1}\}$ and thus could move to whichever node does not have a sufficient number of defender assets and take it.
\else
\lemmaproofonline
\fi
\end{proof}

Now, we prove that \eqref{eq:assets_bound} is necessary to guarantee defense across all graph structures.
Consider the case where the attacker starts outside of the region of visibility; because this is a valid strategy for the attacker, defending against it generates a necessary condition.
Following \cref{lem:defender_three_config}:
when $|\mathcal{P}| = 1$, one defending asset is necessary to defend $\path$,
when $|\mathcal{P}| = 2$, two defending assets are necessary to defend $\path$,
and when $|\mathcal{P}| = 3$, three defending assets are necessary to defend $\path$.
The following remark will allow us to compare the necessary number of defender assets in different paths.
\begin{remark}\label{rem:compare}
If $X$ defender assets are necessary to defend $\path$, then at least $X$ defender assets are necessary to defend $\path^\prime$ if $|\path^\prime|>|\path|$.
\end{remark}
Thus, at least three defending assets are necessary for all $|\mathcal{P}|>3$.
Next, consider two necessary conditions for defending subsets of $\path$:
\begin{enumerate}
\item A node $p \in \mathcal{P}$ is defended only if it is within $k$-hops of a defender asset.
\item For $i \in \{1,\ldots,|\path|-2\}$, the nodes $\{p_{i},p_{i+1},p_{i+2}\}$ are guaranteed-defended only if the nodes $\{p_{[i-k]_+},\ldots,p_{i+k+2}\}$ are initially allocated 3 defender assets.
\end{enumerate}
If the first were contradicted, an attacker asset could enter the region of visibility and reach node $p_i$ in $k$ attacker actions while the defender asset would require $k+1$ actions.
If the second were contradicted, then the defender would not be able to move assets into the three-in-a-row configuration described in \cref{lem:defender_three_config} over $\{p_{i},p_{i+1},p_{i+2}\}$ in $k$ defender turns and the attacker could take one of these nodes.

\iflong
\begin{figure*}[h!]
    \centering
    \includegraphics[width=.99\textwidth]{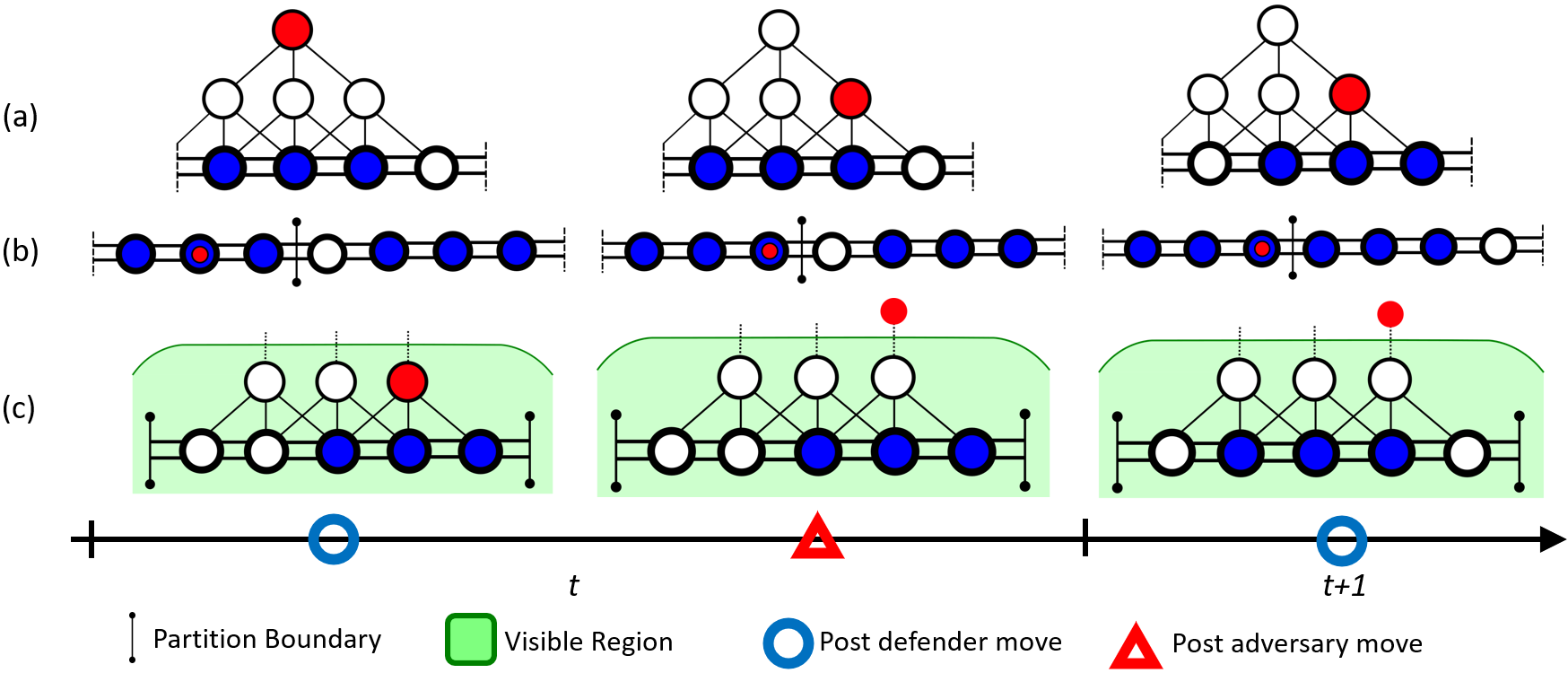}
    \caption{Illustrative scenarios under the proposed defender control algorithm. In (a), the defender platoon follows the adversary asset's movement to ensure guarding. In (b), two platoons move to the boundary between their partitions to ensure $\path$ is guarded when an adversary passes through a partition boundary. In (c), the platoon returns to the center of its partition when the adversary leaves $\visregion$, resetting its state so that it may guard against an asset reappearing anywhere within $\visregion$.}
    \label{fig:example_scenarios}
\end{figure*}
\fi

Now, for some $\beta \in \{1,\ldots,L\}$ consider the task of deploying additional defender assets when the defender's deployment on $\{p_1,\ldots,p_\beta\}$ is already given. 
Regardless of how many, or where, the assets in $\{p_1,\ldots,p_\beta\}$ are initially deployed, they cannot reach node $p_{\beta+k+1}$ in $k$-hops, so (from necessary condition 1), $p_{\beta+k+1}$ is not defended by defender assets in $\{p_1,\ldots,p_\beta\}$, and (from necessary condition 2) there must be three assets deployed on nodes $\{p_{\beta+1},\ldots,p_{\beta+2k+3}\}$, regardless of the deployment of assets on nodes $\{p_1,\ldots,p_{\beta}\}$.
As such, because we can pick any $\beta \in \{0,\ldots,L-1\}$ any continuous sequence of nodes of length $2k+3$ must have at least three assets deployed on them.
From this, a shortest path of length $|\mathcal{P}|=(2k+3)n$ must have $3n$ defender assets, as we can form $n$ disjoint, connected sequences of nodes in $\path$ that must each possess 3 defending assets, or $X \geq \left(3 \frac{L}{2k+3}\right)Y$.

Next, consider $|\mathcal{P}| = (2k+3)n + 1$ and $\{p_1,\ldots,p_{(2k+3)n}\}$ are fully defended by $3n$ assets.
From necessary condition 2, there must be 3 defending assets within the first $k+3$ nodes.
From necessary condition 1, these assets cannot reach (and thus defend) any node further than $p_{2k+3}$,
i.e., node $p_{2k+3+1}$ cannot be reached (or defended) by the first 3 defender assets.
If $n>1$, from necessary condition 2, nodes $\{p_{k+3+1},\ldots,p_{3k+6}\}$ must posses 3 defending assets, and from necessary condition 1, these assets cannot reach (and thus defend) any node further than $p_{2(2k+3)}$; so, node $p_{2(2k+3)+1}$ cannot be reached (or defended) by the first 6 defender assets.
Following this logic for any $n\geq 1$, we can see that $3n$ assets that fully defend the first $(2k+3)n$ nodes cannot reach $p_{(2k+3)n+1}$ and one additional asset is needed.

Similarly, if $|\mathcal{P}| = (2k+3)n + 2$ or $|\path| = (2k+3)n + 3$, and the $\{p_1,\ldots,p_{(2k+3)n}\}$ are fully defended by $3n$ assets, then nodes $p_{(2k+3)n + 1}$, $p_{(2k+3)n + 2}$, and $p_{(2k+3)n + 3}$ cannot be reached by the first $3n$ assets and 2 or 3 additional assets are needed respectively.

From \cref{rem:compare}, 
when $|\mathcal{P}| \in \{(2k+3)n + 3, \ldots, (2k+3)(n+1)\}$, the same number of defender assets ($3n+3$) are necessary. 
Therefore, the necessary number of assets is as expressed in \eqref{eq:assets_bound}.

\subsection{Generalization to Fractional Attacker Strategies}\label{subsec:gen}
We now relax our assumption that attacker has an unsplittable unit mass of assets to extend the results of \cref{subsec:suff} and \cref{subsec:nec} to the general adversary problem. Since there are no restrictions on the number of assets that may pass through an edge, and since the motions of assets are not affected by the motions of other assets, we may simply consider a series of parallel games for each adversary asset.
For example, if the adversary divides its assets in a 70:30 split between two nodes, the defender can respond by first splitting its assets (independently at each node) proportionally in a 70:30 split and then playing two independent games.
The first 70\% of defenders track the 70\% of adversaries, and the second group of 30\% defenders follow the 30\% of adversaries.
Additional splits by the adversary can prompt this defender response recursively.
Since the graph is finite, the number of games the defender must play is also finite.

Because the attacker strategy of moving each of their assets in the same mass remains viable, the necessary number of defender assets remains the same and matches the sufficient number proposed by \cref{subsec:suff}.

Finally, note that a split of adversarial assets in the unobservable region does not change this result. The defender naturally does not react to adversarial actions outside the unobservable region, and once the adversarial assets appear in the visible region path defense can be treated as before by allocating defender resources proportionally and playing multiple independent games.

\iflong
\section{Illustrative Scenarios}
We now show example scenarios of potential adversarial actions and the defender's response under the proposed sufficient defender transition function (Alg.~\ref{alg:defender_step}). Consider the states presented in Fig~\ref{fig:example_scenarios}, which are taken after the defender action at time $t$, after the adversary action at time $t$, and after the defender action at time $t+1$.

In scenario (a), the adversary is shown moving toward $\path$ to the right of the defender platoon. As expected, the defender also shifts its position towards the right at time $t+1$ to guard against the adversary asset.

In scenario (b), the adversary is already located on $\path$ but is guarded by the platoon in the left partition. As it moves towards the boundary between partitions, the platoon in the right partition responds by moving towards the partition boundary, ensuring that the path remains guarded. This example illustrates the hand-off of defending adversarial assets between platoons of different partitions.

In scenario (c), the adversary leaves the visible region, which is shaded in green. Note that at time $t$, the platoon is not located at the center of its partition. However, once the adversary leaves $\visregion$, the platoon restores its position to the center of the partition, thereby ensuring path defense should the adversary's assets reappear at another location.
\fi

\section{Conclusions and Future Work}
In this paper, we investigated the necessary and sufficient number of defenders required to defend a path on an arbitrary graph in the dDAB game as a function of the defender's sensing distance. 
We derived an expression which lower bounds the amount of defender resources required to defend a shortest path as a function of the sensing distance, and also described a defender strategy that guarantees guarding of the path while matching our lower bound.
Together, these results quantify the relationship between the sensing distance and the resources required to defend a shortest path.


Future work will examine key variations to the dDAB game such as additional motion models for the players (e.g., allowing the adversary to move faster than the defender) as well as different objective-region types (e.g., cycles). 


\iflong
\section*{Acknowledgement}
The authors would like to thank Lifeng Zhou, Yue Guan, and Kirby Overman for insightful discussions and valuable feedback.
\fi

\bibliographystyle{IEEEtran}
\bibliography{library}

\end{document}